\theoremstyle{plain}
\newtheorem{thm}{Theorem}
\newtheorem{proposition}[thm]{Proposition}
\newtheorem{cor}[thm]{Corollary}
\newtheorem{lem}[thm]{Lemma}
\theoremstyle{definition}
\newtheorem{df}[thm]{Definition}
\newtheorem{eg}[thm]{Example}
\def\shf{\mathcal}
\def\st{\textrm{star }}
\def\cl{\textrm{cl }}
\def\active{\textrm{active}}
\def\roi{\textrm{roi }}
\title{Modeling wireless network routing using sheaves}
\author{Michael Robinson\\
Mathematics and Statistics\\
American University\\
Washington, DC, USA\\
michaelr@american.edu}
\begin{document}
\maketitle
\begin{abstract}
This article explains how to construct a sheaf model for passing traffic through a wireless network with a single channel carrier sense multiple access/collision detection (CSMA/CD) media access model.
\end{abstract}
%
%
\section{Introduction}

We make the following \emph{single channel assumption}: if a link connected to a node is jammed, then that node cannot receive transmissions from \emph{any} other node.  

\begin{df}
An \emph{abstract simplicial complex} $X$ on a set $A$ is a collection of ordered subsets of $A$ that is closed under the operation of taking subsets.  We call an element of $X$ which itself contains $k+1$ elements a \emph{$k$-cell}.  We usually call a $0$-cell a \emph{vertex} and a $1$-cell an \emph{edge}.  

If $a,b$ are cells with $a \subset b$, we say that $a$ is a \emph{face} of $b$, and that $b$ is a \emph{coface} of $a$.  A cell of $X$ that has no cofaces is called a \emph{facet}.

The \emph{closure} $\cl Y$ of a set $Y$ of cells in $X$ is the smallest abstract simplicial complex that contains $Y$.  The \emph{star} $\st Y$ of a set $Y$ of cells in $X$ is the set of all cells that have at least one face in $Y$.  
\end{df}

Suppose a radio network consisting of a collection of nodes $N=\{n_i\}$ is active in a spatial region $R$.  Assume all nodes communicate through a single-channel, broadcast resource.  An open set $U_i \subset R$ is associated to each node $n_i$ that represents its \emph{transmitter coverage region}.  For each node $n_i$, a continuous function $s_i : U_i \to \mathbb{R}$ represents its \emph{signal level} at each point in $U_i$.  Without loss of generality, we assume that there is a global threshold $T$ for accurately decoding the transmission from any node.

\begin{df}
The \emph{link graph} is the following collection of subsets of $N$:
\begin{enumerate}
\item $\{n_i\} \in N$ for each node $n_i$, and
\item $\{n_i,n_j\}\in N$ if $s_i(n_j) > T$ and $s_j(n_i) >T$.
\end{enumerate}
The \emph{link complex} $L=L(N,U,s,T)$ is the clique complex of the link graph, which means that it contains all elements of the form $\{i_1,\dotsc,i_n\}$ whenever this set is a clique in the link graph.
\end{df}

\begin{proposition}
Each facet in the link complex is a maximal set of nodes that can communicate directly with one another (with only one transmitting at a time).
\end{proposition}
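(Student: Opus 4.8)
The plan is to unwind the definitions in two stages: first relate "able to communicate directly, with only one transmitting at a time" to the purely combinatorial notion of a clique in the link graph, and then relate facets of the clique complex $L$ to maximal cliques of that graph.

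First I would observe that a set $S \subseteq N$ has the property that its members can all communicate directly with one another, with only one node transmitting at any instant, exactly when every pair $\{n_i,n_j\} \subseteq S$ satisfies $s_i(n_j) > T$ and $s_j(n_i) > T$. The point is that "only one transmitting at a time" rules out collisions and jamming, so the single channel assumption is never triggered; the sole obstruction to $n_j$ decoding $n_i$'s broadcast is then that the received signal level $s_i(n_j)$ fail to exceed the global threshold $T$, and symmetrically for $n_i$ decoding $n_j$'s broadcast. Since the edge condition in the link graph is the conjunction $s_i(n_j) > T$ \emph{and} $s_j(n_i) > T$, it is symmetric in $i$ and $j$, so the link graph is a genuine undirected graph and the sets of mutually-communicating nodes are precisely its cliques. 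By the definition of the link complex $L$ as the clique complex of the link graph, these cliques are precisely the cells of $L$.

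Next I would identify facets with maximal cliques. By definition a facet of $L$ is a cell with no coface, i.e.\ a cell $S \in L$ not properly contained in any other cell of $L$; because $L$ is a simplicial complex this is the same as being maximal under inclusion among cells of $L$. If a cell $S$ is not a maximal clique, then some node $n_k \notin S$ makes $S \cup \{n_k\}$ a clique, hence $S \cup \{n_k\} \in L$ is a coface of $S$ and $S$ is not a facet; conversely, if $S$ is a maximal clique then no $S \cup \{n_k\}$ is a clique, so $S$ has no coface in $L$ and is a facet. Therefore the facets of $L$ are exactly the maximal cliques of the link graph, which by the previous paragraph are exactly the maximal sets of nodes all of which can communicate directly with one another, one transmitting at a time.

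The only genuine content, as opposed to definition-chasing, lies in the first step: the claim that pairwise decodability within a group is equivalent to the whole group being able to communicate. The potential obstacle is making precise that no intrinsically multi-node phenomenon intervenes, and the "one transmitting at a time" hypothesis is exactly what supplies this — it removes interference among simultaneous transmitters, so that a given transmission either exceeds threshold at a given listener or it does not, independently of the rest of the network. I would flag this reduction explicitly as the crux and keep the remaining bookkeeping about cofaces and maximality brief.
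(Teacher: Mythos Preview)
Your proof is correct and follows the same definition-unwinding approach as the paper. In fact, your argument is more complete: the paper's proof only checks that every cell consists of mutually communicating nodes and does not explicitly treat maximality, whereas you carefully match facets with maximal cliques.
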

\begin{proof}
Let $c$ be a cell of the link complex.  By definition, for each pair of nodes, $i,j\in c$ implies that $s_i(n_j) > T$ and $s_j(n_i) >T$.  Therefore, $i$ and $j$ can communicate with one another.  
\end{proof}

\begin{cor}
Facets of the link complexes represent common broadcast resources.
\end{cor}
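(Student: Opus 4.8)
The plan is to derive this directly from the preceding Proposition, treating it as an interpretive corollary: the real content is identifying the combinatorial object (a facet of $L$) with the engineering object (a shared single-channel medium), so most of the work is in pinning down what ``common broadcast resource'' should mean and then checking the facet has exactly those features.

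First I would unwind the definition: a \emph{common broadcast resource} for a set $c$ of nodes should mean a medium on which (i) any single member of $c$ may transmit and have its signal decoded by every other member of $c$, (ii) only one member may transmit at a time without a collision, and (iii) $c$ is not properly contained in a larger set with properties (i) and (ii) — otherwise we would not be describing \emph{the} resource but merely a sub-collection of its users. I would then take $c$ to be a facet of $L$. By the Proposition, every pair $i,j \in c$ satisfies $s_i(n_j) > T$ and $s_j(n_i) > T$, which gives (i): a transmission from any $i \in c$ exceeds the decode threshold at every other node of $c$. Property (ii) is the CSMA/CD single-channel hypothesis stated in the introduction — on the one channel shared by $c$, simultaneous transmissions jam the link and, by the single-channel assumption, the jammed nodes then receive nothing from anyone. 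Property (iii) is precisely the maximality clause in the Proposition, since $L$ is a clique complex and a facet is a maximal clique: adjoining any node $n_k \notin c$ would leave some pair without a bidirectional link above threshold, so the enlarged set fails (i).

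I would then close by noting the ``common'' qualifier is exactly what the introduction's single-channel assumption buys us: within a facet the channel is not merely point-to-point but genuinely shared, because a collision anywhere in the facet degrades reception for the whole facet, so the facet must arbitrate access collectively. Hence each facet both supports and requires a single shared-access broadcast medium, i.e.\ it \emph{represents} a common broadcast resource.

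The main obstacle is not mathematical depth but the informality of the target statement: ``common broadcast resource'' is a modeling notion, not a previously defined term, so the proof is really a justification that the proposed formalization (conditions (i)--(iii) above) is the right one and that the Proposition plus the single-channel assumption supply each condition. A secondary subtlety worth a sentence is that the Proposition's clique structure only guarantees pairwise decodability; the single-channel assumption is what promotes pairwise links to a genuinely shared channel, and I would make sure that step is stated explicitly rather than glossed over.
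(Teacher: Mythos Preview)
Your proposal is correct. The paper itself gives no proof whatsoever for this corollary --- it is stated bare, as an immediate interpretive gloss on the preceding Proposition. Your write-up is therefore strictly more than what the paper does: you supply an explicit definition of ``common broadcast resource'' via conditions (i)--(iii) and then check each against the Proposition and the single-channel assumption, whereas the paper is content to let the reader infer all of that from the Proposition's phrase ``maximal set of nodes that can communicate directly with one another (with only one transmitting at a time).'' The two are entirely compatible; yours simply makes the modeling step explicit, which is arguably an improvement given, as you note, that the target phrase is never formally defined.
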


\section{Interference from a transmission}

The interference caused by a transmission impacts the usability of the network outside of the transmission's immediate vicinity.  This section builds a consistent definition of the \emph{region of influence} of a node or a link within the network.  To justify this definition, we use a local model that describes which configurations of nodes can transmit simultaneously.

\begin{df}
Suppose that $X$ is a simplicial complex (such as an interference or link complex) whose set of vertices is $N$.  Consider the following assignment $\shf{A}$ of additional information to capture which nodes are transmitting and decodable:
\begin{enumerate}
\item To each cell $c\in X$, assign the set
\begin{eqnarray*}
\shf{A}(c)&=&\{n \in N : \text{there exists a cell } d\in X \text{ with }\\
&& c \subset d \text{ and } n\in d\}\cup\{\perp\}
\end{eqnarray*} 
of nodes that have a coface in common with $c$, along with the symbol $\perp$.  We call $\shf{A}(c)$ the \emph{stalk} of $\shf{A}$ at $c$.
\item To each pair $c \subset d$ of cells, assign the \emph{restriction function}
\begin{equation*}
\shf{A}(c\subset d)(n) =
\begin{cases}
n&\text{if }n\in \shf{A}(d)\\
\perp&\text{otherwise}\\
\end{cases}
\end{equation*}
\end{enumerate}
\end{df}

For instance, if $c \in X$ is a cell of a link complex, $\shf{A}(c)$ specifies which nearby node is transmitting and decodable, or $\perp$ if none are.  The restriction functions relate the decodable transmitting nodes at the nodes to which nodes are decodable along an attached wireless link.  Similarly, if $c \in X$ is a cell of an interference complex, $\shf{A}(c)$ also specifies which nearby node is transmitting, and effectively locks out any interfering transmissions from other nodes.  

\begin{df}
The assignment $\shf{A}$ is called the \emph{activation sheaf} and is an example of a \emph{cellular sheaf} -- a mathematical object that stores local data.  The theory of sheaves explains how to extract consistent information, which in the case of networks consists of nodes that whose transmissions do not interfere with one another.

A \emph{section} of $\shf{A}$ supported on a subset $Y \subseteq X$ is an assignment $s:Y \to N$ so that for each $c \subset d$ in $Y$, $s(c) \in \shf{A}(c)$ and $\shf{A}(c\subset d)\left( s(c)\right) = s(d)$.  A section supported on $X$ is called a \emph{global section}.  
\end{df}

Specifically, global sections are complete lists of nodes that can be transmitting without interference.  

\begin{figure}
\begin{center}
\includegraphics[width=3in]{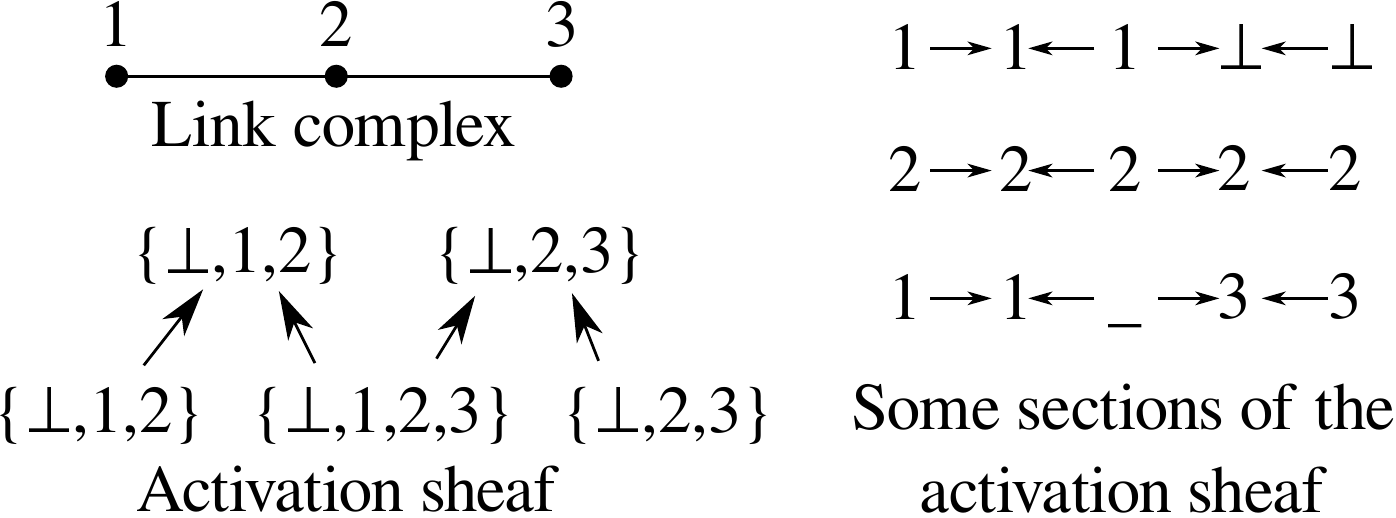}
\caption{A link complex (left top), sheaf $\shf{A}$ (left bottom), and three sections (right).  The restrictions are shown with arrows.  global section when node 1 transmits (right top), global section when node 2 transmits (right middle), and a local section with nodes 1 and 3 attempting to transmit, interfering at node 2 (right bottom)}
\label{fig:linesec}
\end{center}
\end{figure}

\begin{eg}
\label{eg:linesec}
Figure \ref{fig:linesec} shows a network with three nodes, labeled 1, 2, and 3.  When node 1 transmits, node 2 receives.  Because node 2 is busy, its link to node 3 must remain inactive (right top).  When node 2 transmits, both nodes 1 and 3 receive (right middle).  The right bottom diagram shows a local section that cannot be extended to the cell marked with a blank.  This corresponds to the situation where nodes 1 and 3 attempt to transmit but instead cause interference at node 2. 
\end{eg}

\begin{df}
Suppose that $s$ is a global section of $\shf{A}$.  The \emph{active region} associated to a node $n\in X$ in $s$ is the set
\begin{equation*}
\active (s,n) = \{a \in X : s(a)=n\},
\end{equation*}
which is the set of all nodes that are currently waiting on $n$ to finish transmitting.
\end{df}

\begin{lem}
\label{lem:act}
The active region of a node is a connected, closed subcomplex of $X$ that contains $n$.
\end{lem}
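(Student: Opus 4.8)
The plan is to reduce everything to one structural observation about $\shf{A}$: each restriction map either fixes its argument or sends it to $\perp$, so for cells $c \subseteq d$ and any section $s$ we have $s(d) \in \{s(c),\perp\}$. Thus a non-$\perp$ label can only be lost as one passes to a coface, never altered or created. I would first use this to get the subcomplex (closedness) claim almost for free: if $s(a) = n$ and $b$ is a face of $a$, then $\shf{A}(b\subseteq a)(s(b)) = s(a) = n \neq \perp$ forces $s(b) = n$, so every face of a cell in $\active(s,n)$ again lies in $\active(s,n)$.

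The heart of the other two claims is a single ``hub'' cell. Given $a \in \active(s,n)$, the fact that $s(a) = n$ belongs to the stalk $\shf{A}(a)$ means, directly from the definition of $\shf{A}(a)$, that there is a cell $d \in X$ with $a \subseteq d$ and $n\in d$. The point is that $n\in d$ also places $n$ in $\shf{A}(d)$, so the restriction $\shf{A}(a\subseteq d)$ fixes $n$ and therefore $s(d) = n$; hence $d \in \active(s,n)$. Since $n\in d$, the vertex $\{n\}$ is a face of $d$, so by closedness $\{n\} \in \active(s,n)$, which establishes the ``contains $n$'' claim whenever $\active(s,n) \neq \emptyset$ (and if it is empty there is nothing to prove --- that degenerate case is precisely $s(\{n\}) \neq n$, i.e.\ $n$ is not the transmitter heard at its own vertex). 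For connectedness the same $d$ works: it lies in $\active(s,n)$ and contains both $n$ and every vertex of $a$, so for each vertex $v$ of $a$ the edge $\{v,n\}$ lies in $\active(s,n)$; letting $a$ range over all cells of $\active(s,n)$, every vertex of $\active(s,n)$ is adjacent to $\{n\}$ in the $1$-skeleton, which is therefore connected.

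The manipulations with restriction maps and faces are routine; I expect the only real care to be needed in the connectedness step --- specifically in verifying that the bridging cell $d$ genuinely lies in the active region (this is exactly where the explicit form of the stalks is used) and in being explicit about which notion of connectedness of a subcomplex is intended, e.g.\ connectedness of its geometric realization, equivalently of its $1$-skeleton. I would also glance at the degenerate low-dimensional cases ($a = \{n\}$ or $d = \{n\}$) to confirm they cause no trouble.
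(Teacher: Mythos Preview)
Your proposal is correct and follows essentially the same route as the paper: closedness via the observation that restrictions either fix a value or send it to $\perp$, and then the ``contains $n$'' and connectedness claims by producing, for each $a\in\active(s,n)$, a common coface $d$ of $a$ and $n$ with $s(d)=n$ and invoking closedness. Your write-up is somewhat more explicit than the paper's (you spell out why $n\in\shf{A}(d)$ and argue connectedness through the $1$-skeleton), but the underlying argument is the same.
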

\begin{proof}
Consider a cell $c \in \active (s,n)$.  If $c$ is not a vertex, then there exists a $b \subset c$; we must show that $b\in \active (s,n)$.  Since $s$ is a global section $\shf{A}(b\subset c)s(b)=s(c)=n$.  Because $s(c) \not= \perp$, the definition of the restriction function $\shf{A}(b\subset c)$ implies that $s(b)=n$.  Thus $b\in \active (s,n)$ so $\active (s,n)$ is closed.

If $c \in \active (s,n)$, then $c$ and $n$ have a coface $d$ in common.  Since $s$ is a global section $s(d)=\shf{A}(c \subset d)s(c)=\shf{A}(c \subset d)n=n$.  Thus, $n \in \active(s,n)$, because $n$ is a face of $d$ and $\active (s,n)$ is closed.  This also shows that every cell in $\active (s,n)$ is connected to $n$.
\end{proof}

\begin{lem}
\label{lem:stactive}
The star over the active region of a node does not intersect the active region of any other node.
\end{lem}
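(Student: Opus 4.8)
The plan is to reduce the statement to the explicit form of the restriction maps $\shf{A}(c\subset d)$, exactly as in the proof of Lemma \ref{lem:act}. Rather than argue by contradiction, I would take a cell $c$ that lies in $\st\active(s,n)$ and also in $\active(s,m)$ for nodes $n,m$, and show directly that $m=n$; this is equivalent to the asserted disjointness. From $c\in\active(s,m)$ I read off $s(c)=m$. From $c\in\st\active(s,n)$ I obtain, by the definition of the star, a cell $b\in\active(s,n)$ that is a face of $c$, i.e. $b\subseteq c$; hence $s(b)=n$.

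The second step feeds this into the section condition. Since $s$ is a global section and $b\subseteq c$, compatibility forces $\shf{A}(b\subset c)\bigl(s(b)\bigr)=s(c)$, that is, $\shf{A}(b\subset c)(n)=m$. Now I consult the definition of the restriction function: applied to the input $n$, the map $\shf{A}(b\subset c)$ returns either $n$ itself or the symbol $\perp$. Because $m$ is an actual node we have $m\neq\perp$, so the only surviving case is $m=n$. (In the degenerate case $b=c$ the conclusion is immediate, since then $n=s(b)=s(c)=m$.) This shows $\st\active(s,n)\cap\active(s,m)=\varnothing$ whenever $m\neq n$.

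I do not anticipate a genuine obstacle: the result is essentially a one-line consequence of the shape of the restriction maps, and the argument parallels the "closedness" half of Lemma \ref{lem:act}. The only points needing care are bookkeeping ones — making explicit that the star is formed from cells having a (possibly improper) face in $\active(s,n)$, and emphasizing that the claim is about the active region of a \emph{different} node, since $\active(s,n)$ is trivially contained in its own star. If useful, one can record the symmetric corollary that the stars of the active regions of two distinct nodes are mutually disjoint, which is the precise sense in which non-interfering transmissions occupy separated portions of $X$.
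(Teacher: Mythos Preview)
Your argument is correct and follows essentially the same route as the paper: both proofs reduce to the observation that $\shf{A}(b\subset c)(n)\in\{n,\perp\}$, so a coface of a cell with value $n$ carries either $n$ or $\perp$. The paper phrases this as ``$s(c)=\perp$ for $c$ in the star but outside $\active(s,n)$,'' while you phrase it as ``$s(c)=m$ forces $m=n$''; these are equivalent reformulations of the same computation.

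One correction to your closing remark: the ``symmetric corollary'' that the \emph{stars} of two distinct active regions are disjoint is false. On the path $1\text{--}2\text{--}3\text{--}4$ with nodes $1$ and $4$ transmitting, one has $\active(s,1)=\{1,2,\{1,2\}\}$ and $\active(s,4)=\{3,4,\{3,4\}\}$, and the edge $\{2,3\}$ lies in both stars (with $s(\{2,3\})=\perp$). The lemma only guarantees that $\st\active(s,n)$ misses the \emph{active region} of another node, not its star.
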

\begin{proof}
Let $c\in \st \active (s,n)$.  Without loss of generality, assume that $c \notin \active (s,n)$.  Therefore, there is a $b \in \active (s,n)$ with $b\subset c$.  By the definition of the restriction function $\shf{A}(b\subset c)$, the assumption that $c\notin \active (s,n)$, and the fact that $s$ is a global section, $s(c)$ must be $\perp$.
\end{proof}

\begin{cor}
  If $s$ is a global section of an activation sheaf $\shf{A}$, then the \emph{support} of $s$ -- the set of cells $c$ where $s(c) \not= \perp$ -- consists of a disjoint union of active regions of nodes. 
\end{cor}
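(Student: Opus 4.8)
The plan is to deduce the statement almost directly from the definitions, using Lemmas \ref{lem:act} and \ref{lem:stactive} only to upgrade a purely set-theoretic disjoint union into an honest decomposition into subcomplexes. First I would identify the support as a set of cells: a cell $c$ lies in $\supp s$ exactly when $s(c) \neq \perp$, and since $s(c)$ is an element of $\shf{A}(c) \subseteq N \cup \{\perp\}$, this happens precisely when $s(c) = n$ for some node $n \in N$, i.e.\ precisely when $c \in \active(s,n)$. So $\supp s = \bigcup_{n} \active(s,n)$, where the union may be taken over just those nodes $n$ with $\active(s,n)$ nonempty (if there are none, $\supp s = \emptyset$, the empty disjoint union).

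Next I would verify disjointness and that the pieces fit together as subcomplexes. Disjointness is immediate: if a cell $c$ belonged to both $\active(s,n)$ and $\active(s,m)$, then $n = s(c) = m$ since $s$ is a function. By Lemma \ref{lem:act} each $\active(s,n)$ is a closed, connected subcomplex of $X$, so $\supp s$ is itself a closed subcomplex of $X$, now written as a disjoint union of closed connected subcomplexes, which is exactly the claim.

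The one point that deserves attention, and the reason Lemma \ref{lem:stactive} appears, is to confirm that distinct active regions are not ``bridged'' within $\supp s$ by a cell having faces in two of them; otherwise the union, though disjoint, would fail to separate $\supp s$ into its connected components. But if $c$ has a face lying in $\active(s,n)$ while $c \notin \active(s,n)$, then Lemma \ref{lem:stactive} forces $s(c) = \perp$, so $c \notin \supp s$; hence no such bridging cell survives in $\supp s$, and the active regions are in fact precisely its connected components. I expect no genuine obstacle here: all of the structural content is already carried by the two preceding lemmas, and the corollary is essentially their bookkeeping.
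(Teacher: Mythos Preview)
Your argument is correct. The paper actually states this corollary without proof, treating it as an immediate consequence of Lemmas~\ref{lem:act} and~\ref{lem:stactive}, and your write-up supplies exactly the details the paper leaves implicit: the support is tautologically $\bigcup_n \active(s,n)$, disjointness is forced because $s$ is a function, and the two lemmas then guarantee that these pieces are closed, connected, and not bridged within $\supp s$. If anything, you do slightly more than the bare statement requires---the set-theoretic disjoint union already follows from the definitions alone, without invoking either lemma---but the extra step identifying the active regions as the connected components of $\supp s$ is both natural and in the spirit of the paper's placement of the corollary.
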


\begin{lem}
  \label{lem:activeinvariant}
The active region of a node is independent of the global section.  More precisely, if $r$ and $s$ are global sections of $\shf{A}$ and the active regions associated to $n \in X$ are nonempty in both, then $\active (s,n)=\active (r,n)$.
\end{lem}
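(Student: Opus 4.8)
The plan is to prove the sharper, manifestly section-independent statement that whenever $\active(s,n)$ is nonempty it equals the closed subcomplex $\cl \st \{n\}$, which is determined by $X$ and $n$ alone; the lemma is then immediate, since $\active(s,n)=\cl\st\{n\}=\active(r,n)$. A useful preliminary observation is that $\cl\st\{n\}$ is precisely $\{\,c\in X : n\in\shf{A}(c)\,\}$: unwinding the definitions, a cell lies in $\cl\st\{n\}$ exactly when it is a face of some cell containing $n$, and this is exactly the condition for $n$ to belong to the stalk $\shf{A}(c)$.

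One inclusion is easy. If $c\in\active(s,n)$ then $s(c)=n$, and a section always satisfies $s(c)\in\shf{A}(c)$, so $n\in\shf{A}(c)$, i.e.\ $c\in\cl\st\{n\}$. For the reverse inclusion I would exploit the nonemptiness hypothesis via Lemma~\ref{lem:act}: that lemma tells us $\active(s,n)$ is a \emph{closed} subcomplex containing the vertex $\{n\}$, so in particular $s(\{n\})=n$. Now let $e$ be any cell with $n\in e$ (the case $e=\{n\}$ being trivial). Then $\{n\}\subset e$ is a face relation, and $n\in\shf{A}(e)$ because $e$ itself witnesses it, so the global-section condition forces $s(e)=\shf{A}(\{n\}\subset e)\big(s(\{n\})\big)=\shf{A}(\{n\}\subset e)(n)=n$. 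Hence $\st\{n\}\subseteq\active(s,n)$; since $\active(s,n)$ is closed and $\cl\st\{n\}$ is by definition the smallest closed subcomplex containing $\st\{n\}$, we get $\cl\st\{n\}\subseteq\active(s,n)$, and therefore equality.

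I do not anticipate a real obstacle. The whole argument turns on the fact that every restriction function $\shf{A}(c\subset d)$ acts as a partial identity -- it either keeps its input or sends it to $\perp$ -- so a non-$\perp$ label on one cell propagates rigidly to its faces and cofaces wherever the stalks permit, and the base point $s(\{n\})=n$ is precisely what the nonemptiness hypothesis supplies through Lemma~\ref{lem:act}. The only points requiring a little care are the degenerate face relations (such as $e=\{n\}$, covered by the identity restriction) and the trivial monotonicity $n\in e\Rightarrow n\in\shf{A}(e)$ that makes the propagation go through. Finally, the hypothesis that \emph{both} active regions be nonempty cannot be dropped: if $\active(r,n)$ were empty it would certainly differ from a nonempty $\active(s,n)$.
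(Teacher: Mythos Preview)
Your argument is correct, and the underlying mechanics are the same as the paper's: invoke Lemma~\ref{lem:act} to pin down $s(\{n\})=n$, push this value up to any coface $e\ni n$ via the restriction $\shf{A}(\{n\}\subset e)$, and then descend to faces using closedness of the active region. The paper carries this out as a direct inclusion $\active(s,n)\subseteq\active(r,n)$: given $c$ with $s(c)=n$, it picks a common coface $d$ of $c$ and $n$, uses $r(\{n\})=n$ to force $r(d)=n$, and then closedness of $\active(r,n)$ gives $c\in\active(r,n)$.

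What you do differently is factor the argument through the explicit, section-free description $\active(s,n)=\cl\st\{n\}$. This is a genuine bonus: it not only proves the lemma but also identifies the active region intrinsically, which is exactly the content needed for the subsequent definition of the \emph{region of influence} as $\st\cl$ of a facet. The paper's version is marginally shorter for the lemma as stated, while yours makes the invariance self-evident and supplies a formula that the paper only arrives at implicitly afterward.
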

\begin{proof}
Without loss of generality, we need only show that $\active (s,n) \subseteq \active (r,n)$.  If $c \in\active(s,n)$, there must be a cell $d\in X$ that has both $n$ and $c$ as faces.  Now $s(n)=r(n)=n$ by Lemma \ref{lem:act}, which means that $r(d)=\shf{A}(n \subset d)r(n)=n$.  Therefore, since $\active (r,n)$ is closed, this implies that $c \in \active(r,n)$.  
\end{proof}

\begin{cor}
  \label{cor:activationsections}
  The space of global sections of an activation sheaf consists of all sets of nodes that can be transmitting simultaneously without interference.
\end{cor}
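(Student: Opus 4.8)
The plan is to make the informal phrase ``sets of nodes that can be transmitting simultaneously without interference'' precise and then read off the statement from the structural results already in hand. I would take the precise meaning to be a subset $S \subseteq N$ with the property that for each $n \in S$ the active region of $n$ is nonempty in some global section, and that the active regions attached to distinct elements of $S$ are mutually non-interfering in the sense of Lemma \ref{lem:stactive} (the star of each one meets none of the others). The content of the corollary is then that
\[
  \text{(global section } s\text{)} \;\longmapsto\; S(s) := \{n \in N : \active(s,n) \neq \emptyset\}
\]
is a bijection from global sections of $\shf{A}$ onto the collection of such admissible sets, and the proof is a matter of exhibiting both directions.

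For the forward direction, let $s$ be a global section and put $S = S(s)$. By the corollary following Lemma \ref{lem:stactive}, the support of $s$ is the disjoint union $\bigsqcup_{n \in S} \active(s,n)$ of active regions, and Lemma \ref{lem:stactive} says exactly that no star of one of these pieces meets another, so $S$ is admissible. By Lemma \ref{lem:activeinvariant}, each $\active(s,n)$ with $n \in S$ depends only on $n$; hence $s$ is recovered from $S$ (it equals $n$ on $\active(\,\cdot\,,n)$ for each $n \in S$ and $\perp$ off the support), which gives injectivity of $s \mapsto S(s)$.

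For the converse I would first record the ``single transmitter'' section: for a fixed node $n$ define $s_n(c) = n$ when $c$ and $n$ share a coface, i.e. $n \in \shf{A}(c)$, and $s_n(c) = \perp$ otherwise. That $s_n$ is a global section is a short check using the monotonicity $\shf{A}(d) \subseteq \shf{A}(c)$ for $c \subset d$ together with the definition of the restriction function; write $A_n = \active(s_n,n)$, which contains $n$ by Lemma \ref{lem:act}. Given an admissible $S$, define $s(c) = n$ if $c \in A_n$ for the $n \in S$ (unique, because admissibility forces the $A_n$ to be pairwise disjoint) and $s(c) = \perp$ otherwise. One then verifies $\shf{A}(b \subset c)(s(b)) = s(c)$ for every incident pair $b \subset c$ by splitting on whether $s(b) = \perp$: closedness of the $A_n$ (Lemma \ref{lem:act}) forces $s(c) = \perp$ whenever $s(b) = \perp$, and when $s(b) = n$ the required identity is precisely the corresponding instance of the global-section condition already verified for $s_n$. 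Finally $\active(s,n) = A_n$ for $n \in S$ and is empty otherwise, so $S \mapsto s$ inverts the forward map.

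The only step that is not pure bookkeeping is isolating the exact non-interference hypothesis that makes the gluing in the converse well defined: one needs the regions $A_n$, $n \in S$, to be pairwise disjoint, so I would check that the ``star'' formulation of non-interference from Lemma \ref{lem:stactive} indeed implies this disjointness (using $A_m \subseteq \st A_m$), so that the two directions assemble into a genuine bijection. Everything else reduces to the restriction-function case analysis, which is routine given Lemmas \ref{lem:act}, \ref{lem:stactive}, and \ref{lem:activeinvariant}.
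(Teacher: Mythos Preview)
Your argument is correct. The paper states this corollary without proof, treating it as immediate from Lemmas~\ref{lem:act}, \ref{lem:stactive}, \ref{lem:activeinvariant}, and the support-decomposition corollary; your bijection via the single-transmitter sections $s_n$ and the gluing over an admissible set $S$ is exactly the natural way to make that precise, and your observation that pairwise disjointness of the closed subcomplexes $A_n$ is what makes the glued assignment satisfy the section condition is the right bookkeeping.
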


\begin{df}
Because of the Lemmas, we call the star over an active region associated to a node $n$ the \emph{region of influence}.  The region of influence of a node can be written as a union
\begin{equation*}
\roi n = \bigcup_{f \in F} \st \cl f,
\end{equation*}
for a collection $F$ of facets, so we call the star over a closure of a facet the \emph{region of influence} of that facet.
\end{df}

\begin{cor}
\label{cor:unaffected}
The complement of the region of influence of a facet is a closed subcomplex.
\end{cor}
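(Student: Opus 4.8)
The plan is to show that the region of influence $\st \cl f$ of a facet $f$ is an \emph{open} set of cells in $X$ — in the sense that it is closed under passing to \emph{cofaces} — and then to observe that the complement of any coface-closed set of cells is automatically face-closed, hence a closed subcomplex.

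First I would unwind the definitions: $\st \cl f$ is the set of all cells of $X$ having at least one face lying in $\cl f$. Suppose $b \in \st \cl f$ and $b \subset c$ for some cell $c \in X$. Then there is a cell $a \in \cl f$ with $a \subseteq b$; since $a \subseteq b \subseteq c$, the cell $a$ is also a face of $c$, so by definition $c \in \st \cl f$. This shows $\st \cl f$ is closed under taking cofaces. (In fact the same one-line argument shows $\st Y$ is coface-closed for \emph{any} set $Y$ of cells, which is all that is used.)

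Next I would set $Z = X \setminus \st \cl f$ and verify that $Z$ is closed under taking faces. Let $c \in Z$ and let $b \subset c$ be a face. If $b \notin Z$, then $b \in \st \cl f$, and the previous paragraph applied to $b \subset c$ forces $c \in \st \cl f$, contradicting $c \in Z$. Hence $b \in Z$. Thus $Z$ is a subset of $X$ closed under subsets, i.e. a closed subcomplex.

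The argument is entirely formal, so I do not expect a genuine obstacle; the only thing to be careful about is matching the paper's conventions — that ``star'' is defined via faces (so $\st$ is coface-closed) and that ``closed subcomplex'' just means closed under faces. As a remark, the same reasoning gives that $\roi n = \bigcup_{f\in F}\st \cl f$ also has closed-subcomplex complement, since its complement is the finite intersection $\bigcap_{f\in F}\left( X \setminus \st \cl f\right)$ of closed subcomplexes, which is again a closed subcomplex.
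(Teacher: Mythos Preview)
Your argument is correct. The paper states this corollary without an explicit proof, treating it as immediate from the definitions; your direct verification that $\st Y$ is always coface-closed (so its complement is face-closed, hence a closed subcomplex) is precisely the elementary argument one would supply, and your closing remark extending this to $\roi n$ via intersection of closed subcomplexes is a sound bonus.
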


Although the space of global sections for an activation sheaf is a useful invariant, the cohomology of an activation sheaf is rather uninteresting.  However, activation sheaves are not sheaves of vector spaces, so we need to enrich their structure somewhat to see this.

\begin{df}
  If $\shf{A}$ is an activation sheaf on an abstract simplicial complex $X$, the \emph{vector activation sheaf} $\widehat{\shf{A}}$ is given by specifying its stalks and restrictions:
\begin{enumerate}
\item To each cell $c\in X$, let $\widehat{\shf{A}}(c)$ be the vector space whose basis is $\shf{A}\backslash\{\perp\}$ (so the dimension of this vector space is the cardinality of $\shf{A}$ without counting $\perp$)
\item The restriction map $\widehat{\shf{A}}(c\subset d)(n)$ is the basis projection, which is well-defined since $\shf{A}(d) \subseteq \shf{A}(c)$.
\end{enumerate}
\end{df}

\begin{thm}
  The dimension of the cohomology spaces of a vector activation sheaf $\widehat{\shf{A}}$ on a link complex $X$ are
  \begin{equation*}
    \text{dim }H^k(\widehat{\shf{A}}) = \begin{cases}
      \text{the total number of nodes}&\text{if }k = 0\\
      0&\text{otherwise}\\
      \end{cases}
  \end{equation*}
\end{thm}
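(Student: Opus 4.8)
The plan is to split $\widehat{\shf{A}}$ into a direct sum of one subsheaf per node, compute the cohomology of each summand, and reassemble. Recall that the cohomology of a cellular sheaf is the cohomology of the cochain complex $C^k=\bigoplus_{\dim c=k}\widehat{\shf{A}}(c)$ whose coboundary is assembled from the restriction maps. Since $\widehat{\shf{A}}(c)$ is freely spanned by $\shf{A}(c)\setminus\{\perp\}$, for each node $n$ I would let $\widehat{\shf{A}}_n$ be the subsheaf whose stalk at $c$ is the line through $n$ when $n\in\shf{A}(c)$ and is $0$ otherwise. The first thing to verify is that this really is a decomposition of sheaves: the restriction maps are basis projections, so they either fix $n$ or send it to $0$, and the inclusion $\shf{A}(d)\subseteq\shf{A}(c)$ guarantees $\widehat{\shf{A}}_n(d)=0$ whenever $\widehat{\shf{A}}_n(c)=0$; hence each restriction respects the splitting and $\widehat{\shf{A}}=\bigoplus_{n\in N}\widehat{\shf{A}}_n$. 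Because $N$ is finite this yields $H^k(\widehat{\shf{A}})=\bigoplus_{n\in N}H^k(\widehat{\shf{A}}_n)$.

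Next I would identify $\widehat{\shf{A}}_n$ with a constant sheaf on the closed star of $n$. Its support is $\{c\in X:n\in\shf{A}(c)\}$, and this equals $\{c\in X:c\cup\{n\}\in X\}=\cl\st\{n\}$: if $n\in\shf{A}(c)$ there is a coface $d\supseteq c$ with $n\in d$, so $c\cup\{n\}\subseteq d$ lies in $X$ by downward closure, and conversely one takes $d=c\cup\{n\}$. On $\cl\st\{n\}$ every stalk of $\widehat{\shf{A}}_n$ is one-dimensional and every restriction is the identity, while off it every stalk vanishes, so the cochain complex of $\widehat{\shf{A}}_n$ on $X$ is literally the simplicial cochain complex of $\cl\st\{n\}$ with real coefficients; thus $H^k(\widehat{\shf{A}}_n)\cong H^k(\cl\st\{n\};\mathbb{R})$.

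To finish, I would invoke the standard fact that the closed star of a vertex is contractible: every facet of $\cl\st\{n\}$ contains $n$ (a maximal face of $\cl\st\{n\}$ is a face of some cell of $X$ containing $n$, hence equal to it), so $c\mapsto c\cup\{n\}$ exhibits $\cl\st\{n\}$ as a cone with apex $n$, and it is nonempty because $\{n\}$ is a vertex of the link complex. Hence $H^0(\cl\st\{n\};\mathbb{R})=\mathbb{R}$ and $H^k(\cl\st\{n\};\mathbb{R})=0$ for $k>0$, and summing over the $|N|$ nodes gives exactly the stated dimensions. The only real obstacle here is conceptual rather than computational — spotting the node-wise decomposition and recognizing each piece as the constant sheaf on a closed star; once that is in place the argument is routine, and in fact the clique-complex hypothesis is never used, so the same proof computes the cohomology of the vector activation sheaf on any finite simplicial complex whose vertex set is $N$.
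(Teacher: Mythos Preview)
Your argument is correct and is genuinely different from the paper's. The paper decomposes along the \emph{facets} of $X$: it builds a short exact sequence $0\to\widehat{\shf{A}}\to\bigoplus_i\shf{F}_i\to\shf{S}\to 0$, where each $\shf{F}_i$ is a sum of constant sheaves supported on the closure of a facet, and then chases the associated long exact sequence, invoking the earlier region-of-influence corollary to control $\shf{S}$. You instead decompose along the \emph{nodes}, writing $\widehat{\shf{A}}=\bigoplus_{n\in N}\widehat{\shf{A}}_n$ and recognising each summand as the constant sheaf on the closed star $\cl\st\{n\}$, which is a cone and hence acyclic. Your route is shorter and more self-contained: it needs no exact sequences, no separate treatment of $k=1$ versus $k>1$, and---as you observe---never uses that $X$ is a clique complex, so it applies to the vector activation sheaf on any finite simplicial complex. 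The paper's approach, by contrast, ties the computation back to the facet/region-of-influence machinery developed earlier in the section, which is thematically appropriate there but analytically heavier.
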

\begin{proof}
  Every global section of $\shf{A}$ corresponds to a global section of $\widehat{\shf{A}}$, but formal linear combinations of global sections of $\shf{A}$ are also global sections of $\widehat{\shf{A}}$.  Therefore, a global section of $\widehat{\shf{A}}$ merely consists of a list of those nodes that are transmitting, without regard for whether they interfere.

  The fact that the other cohomology spaces are trivial is considerably more subtle.  Consider the decomposition
  \begin{equation*}
    X = \bigcup_i F_i
  \end{equation*}
  of the link complex into the set of its facets.  Suppose that $F_i$ is a facet of dimension $k$, and defining $\shf{F}_i$ to be the direct sum of $k+1$ copies of the constant sheaf supported on $F_i$.  (Each copy corresponds one of the vertices of $F_i$.)  Then there is an exact sequence of sheaves
  \begin{equation*}
    \xymatrix{
      0 \to \widehat{\shf{A}} \ar[r]^\Delta & \bigoplus_i \shf{F}_i \ar[r]^m & \shf{S} \to 0\\
      }
  \end{equation*}
  where $\Delta$ is a map that takes a basis vector corresponding to a given node to the linear combination of all corresponding basis vectors in each copy of the constant sheaves, and $m$ is therefore a kind of difference map.  This exact sequence leads to a long exact sequence
  \begin{equation*}
\dotsb H^{k-1}(\shf{S}) \to H^k(\widehat{\shf{A}}) \to \bigoplus_i H^k(\shf{F}_i) \to H^k(\shf{S}) \dotsb
  \end{equation*}
Since each $\shf{F}_i$ is a direct sum of constant sheaves supported on a closed subcomplex, it only has nontrivial cohomology in degree 0.

Observe that $\shf{S}$ is a sheaf supported on sets of cells lying in the intersections of facets.  By Corollary \ref{cor:unaffected}, $\shf{S}$ must be a direct sum of copies of constant sheaves supported on closed subcomplexes, like each $\shf{F}_i$.  Thus $\shf{S}$ only has nontrivial cohomology in degree 0, which means that for $k>1$, $H^k(\widehat{\shf{A}}) = 0$.

It therefore remains to address the $k=1$ case, which comes about from the exact sequence
\begin{equation*}
  \bigoplus_i H^0(\shf{F}_i) \to H^0(\shf{S}) \to H^1(\widehat{\shf{S}}) \to 0.
\end{equation*}
The leftmost map is surjective, since every global section of $\shf{S}$ is given by specifying a single transmitting node.  By picking exactly one facet containing that node, a global section of the corresponding $\shf{F}_i$ may be selected in the preimage.  Thus the map $H^0(\shf{S}) \to H^1(\widehat{\shf{S}})$ must be the zero map and yet also surjective.  This completes the proof.
\end{proof}

\section{Modeling link and node data payloads}

The activation sheaf describes the state of the network at a single instant in time.  Because the network conditions may change over time, the link and interference complexes may also change with time.  This section describes a general framework for representing both these changes and the data that is transmitted over the links.  

In order to capture changes in the network's topology over time, it is appropriate to use a single link or interference complex to represent a single timeslice.  To represent how the network's state evolves over several consecutive timeslices, we construct additional links between nodes in different timeslices.  These links carry information from one timeslice to the next.

Extending the definition of a link complex above, again suppose that a radio network consists of a collection of nodes $N=\{n_i\}$ in a spatial region $R$ in which a coverage region $U_i \subset R$ is associated to each node $n_i$.  For each node $n_i$, assign a signal level function $s_i : U_i \times \mathbb{Z} \to \mathbb{R}$, where the second input represents time.  Again, without loss of generality, we assume that there is a global decoding threshold $T$.

\begin{df}
The \emph{time-dependent link graph} is the following collection of subsets of $N \times \mathbb{Z}$:
\begin{enumerate}
\item $\{(n_i,t)\} \in N$ for each node $n_i$ and $t \in \mathbb{Z}$,
\item $\{(n_i,t),(n_j,t)\}\in N$ if $s_i(n_j,t) > T$ and $s_j(n_i,t) >T$, and
\item $\{(n_i,t),(n_i,t+1)\}$ for each node $n_i$.
\end{enumerate}
The \emph{time-depdendent link complex} $L=L(N,U,s,T)$ is the clique complex of the time-dependent link graph, which means that it contains all elements of the form $\{i_1,\dotsc,i_n\}$ whenever this set is a clique in the link graph.  The \emph{time $t$ timeslice} of $L$ is the maximal subcomplex of $L$ containing vertices from $N \times \{t\}$.  The \emph{time-dependent interference complex} and its timeslices can be defined in an analogous manner.
\end{df}

\begin{figure}
\begin{center}
\includegraphics[width=3in]{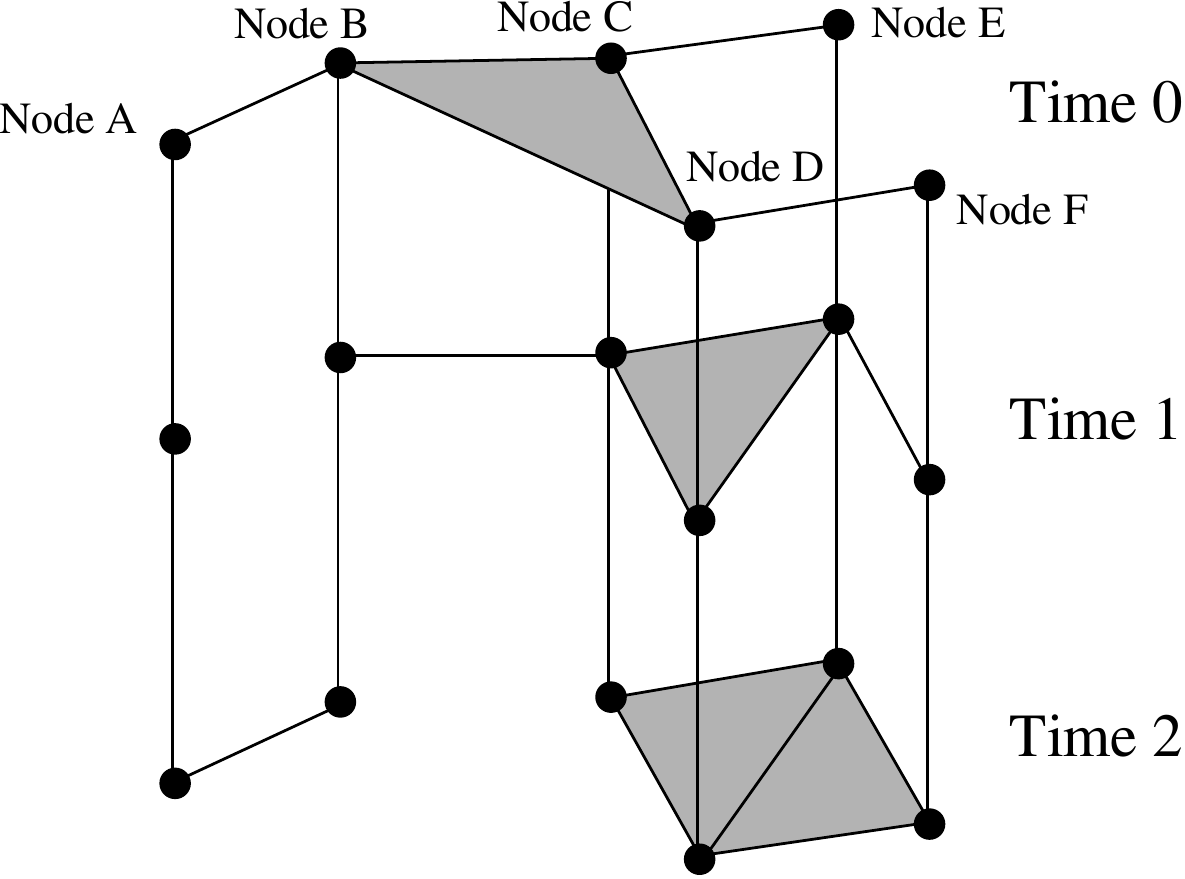}
\caption{Evolution of a simplicial complex model of a wireless network through time.  The nodes A--F listed at time 0 are repeated through the other timeslices as proceeding vertically downward through the diagram.}
\label{fig:timeslices}
\end{center}
\end{figure}

Figure \ref{fig:timeslices} shows an example of a time-dependent link complex.  Notice that each timeslice is a link complex, and that timeslices are attached to one another only by edges between consecutive copies of the same node.  The interpretation is that this represents a network in which the links are memoryless; only nodes can retain information from one timeslice to the next.

In order to represent the information that is retained by a node concretely, we will first construct a sheaf model of a queue and then generalize.  For example, consider the following diagram of sets
\begin{equation*}
\xymatrix{
\dotsb\ar[r]&0&A\ar[r]\ar[l]&0&A\ar[r]\ar[l]&\dotsb
}
\end{equation*} 
in which the arrows represent functions taking all elements of $A$ to $0$.  This diagram is a sheaf over a simplicial complex whose sections are given by assigning (possibly different) elements of $A$ to each vertex.  In other words, the space of sections of this sheaf is the space of $A$-valued sequences.  This sheaf should be interpreted as a $1$-deep queue: each $A$ in the diagram above represents the contents of the queue at a timeslice, but nothing is retained from one timeslice to the next.  To generalize this construction to larger queues, consider the following definition.

\begin{df}
Suppose $X$ is the abstract simplicial complex model of $\mathbb{R}$ whose vertices are given by the set of integers and whose edges are given by pairs $(t,t+1)$.  The \emph{$n$-term grouping sheaf $\shf{A}^{(n)}$} is given by the diagram written over $X$
\begin{equation*}
\xymatrix{
\ar[r]&A^{n-1}&A^n\ar[r]^{\sigma_+}\ar[l]^{\sigma_-}&A^{n-1}&A^n\ar[r]^{\sigma_+}\ar[l]^{\sigma_-}&
}
\end{equation*}
where $\sigma_\pm$ are the projection functions given by
\begin{equation*}
\sigma_-(x_1,\dotsc,x_n)=(x_2,\dotsc,x_n)
\end{equation*}
and
\begin{equation*}
\sigma_+(x_1,\dotsc,x_n)=(x_1,\dotsc,x_{n-1}).
\end{equation*}
\end{df}

The $\sigma_\pm$ projection functions in a grouping sheaf represent the action of advancing the queue from one timestamp to the next.  Specifically, consider the diagram shown in Figure \ref{fig:grouping_section}, which represents a $3$-term grouping sheaf.  To read the diagram as a $3$-deep queue, think of data as entering the queue from the first element of the vectors (the top) and exiting from the last (the bottom).  The action of $\sigma_-$ is simply to drop the first element, which maintains consistency between the current timestep and the one previous.  Conversely, $\sigma_+$ drops the last element, which links the current timestep to the next one.  

\begin{figure}
\begin{center}
\includegraphics[width=3in]{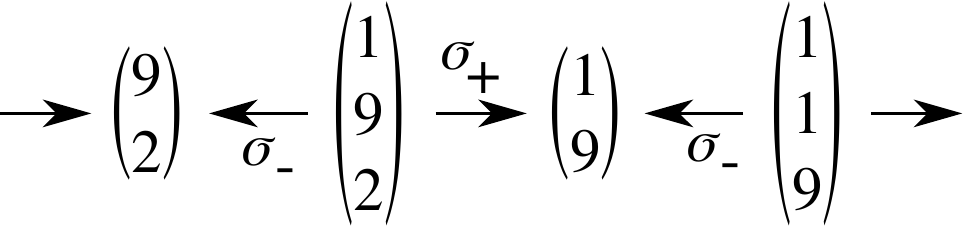}
\caption{Section of a $3$-term grouping sheaf modeling a $3$-deep queue}
\label{fig:grouping_section}
\end{center}
\end{figure}

In addition to retaining a queue of packets, nodes also need to be able to switch between transmit, receive, and idle states.  These states control two behaviors (1) the link activity and (2) when the queues advance.  (For the moment, let us only consider the transmit queue in a node.)  The simplest way to manage several models of behavior is by using a state variable to switch between them.  This is well established in the mathematical systems theory literature \cite{Mosleh_1994,Laskey_1996}, though we merely leverage the ideas as needed here without going into more detail.  

Consider the sheaf model shown in Figure \ref{fig:model_uncert}, which exhibits switching behavior.  This sheaf represents two inputs (on the left) valued in a set $A$ linking to a single $A$-valued output (on the right).  The data over the vertex contains a boolean state value and both inputs.  The left two restriction maps merely extract the two inputs.  The right restriction map is given by the expression $zy+(1-z)x$.  Observe that when $z=0$, this map produces $x$.  On the other hand, if $z=1$ the map produces $y$. 

\begin{figure}
\begin{center}
\includegraphics[width=2in]{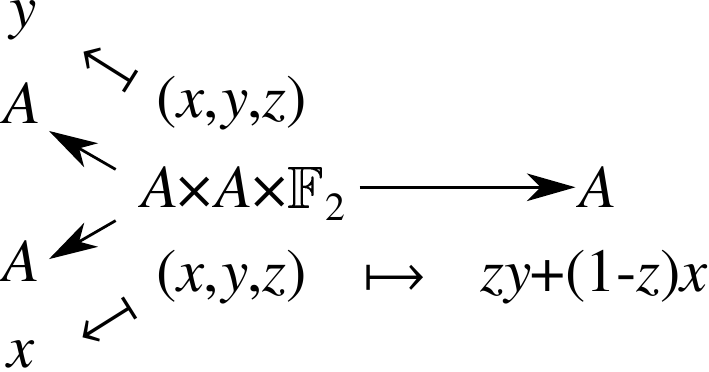}
\caption{A sheaf that models switching an output (right) between two inputs (left)}
\label{fig:model_uncert}
\end{center}
\end{figure}

The example in Figure \ref{fig:model_uncert} is easily generalized to use more elaborate state variables and to exhibit more complex behavior.  Consider a wireless network with two nodes $N=\{0,1\}$ over the course of two consecutive timeslices.  The sheaf we will construct is summarized in Figure \ref{fig:two_nodes_timeseries}, which is written over a time-dependent link complex.  It is helpful to examine the horizontal restriction maps (within a single node) separately from the vertical restrictions (within a single timeslice).   

\begin{figure}
\begin{center}
\includegraphics[width=3.5in]{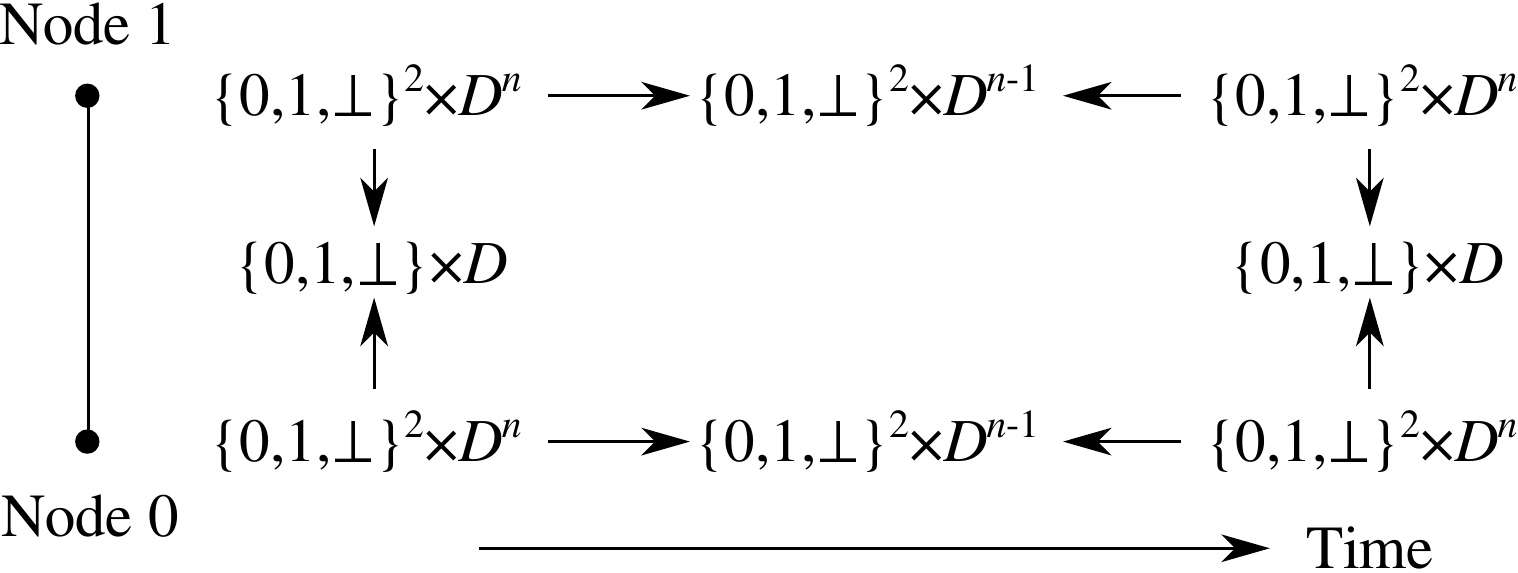}
\caption{A sheaf representing two timesteps of two nodes with a shared channel}
\label{fig:two_nodes_timeseries}
\end{center}
\end{figure}

The diagram shown in Figure \ref{fig:two_nodes_timeseries_time} represents the state of a single node in two timesteps.  The node contains two state variables, a receive buffer, and a transmit queue.  The state variables take values in $N \cup \{\perp\}$ where $N$ is the set of nodes in the network, which represents the identity of the node which is currently transmitting or $\perp$ if the link is idle.  The state variables lie in a $2$-deep queue, so the stalk at a vertex contains both the current and the previous state.

\begin{figure}
\begin{center}
\includegraphics[width=4in]{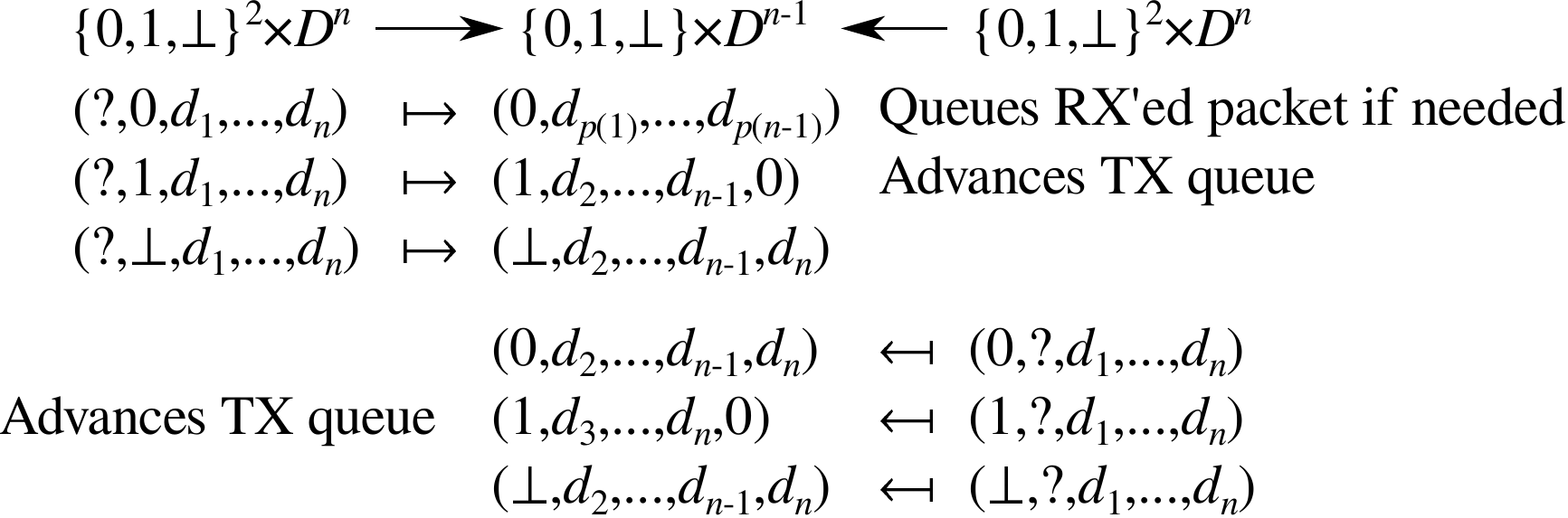}
\caption{The between-timestep restrictions for the sheaf in Figure \ref{fig:two_nodes_timeseries}}
\label{fig:two_nodes_timeseries_time}
\end{center}
\end{figure}

The receive buffer and transmit queue are encoded in the vector $D^n$: the receive buffer is the first element, and the transmit queue is $n-1$ elements long.  Unlike a grouping sheaf, which always advances from one timestep to the next, the transmit queue is only advanced when the node is transmitting.  Thus, the stalk over the edge in Figure \ref{fig:two_nodes_timeseries_time} needs to accomodate a vector that is the same length as the transmit queue to retain the entire queue if the node is not transmitting.  When tranmission occurs, the final element of the vector over the edge is set to zero, effectively yielding a grouping sheaf construction.

When the node is receiving, it must make a determination about whether to forward the received packet or not.  This decision needs to take into account at least the received packet's destination and priority.  
\begin{df}
To represent this decision process abstractly -- without committing to a specific protocol -- this should be modeled as a \emph{receive queue function} $q:D^n \to D^{n-1}$ of the form
\begin{equation*}
q(x_1,x_2,\dotsc,x_n)=\left(x_{p(1,x_1,\dotsc,x_n)},x_{p(2,x_1,\dotsc,x_n)},\dotsc,x_{p(n-1,x_1,\dotsc,x_n)}\right)
\end{equation*}
where for brevity, $p(i,x_1,\dotsc,x_n)$ is written $p(i)$ in Figure \ref{fig:two_nodes_timeseries_time}.  
\end{df}

Notice that a receive queue function $q$ is a permutation of a subset of its inputs that depends on what those inputs are.  Different choices for $q$ determine different protocols for forwarding packets, which usually make decisions based on the contents of $x_1$.  For instance, the following are some possible receive queue functions of a node $n_i \in N$ (this list should not be considered exhaustive, nor are any of these protocols optimal in any useful sense)
\begin{description}
\item[Forward nothing:] 
\begin{equation*}
q(x_1,\dotsc,x_n)=(x_2,x_3,\dotsc,x_n).
\end{equation*}  This protocol doesn't forward any received packets -- it is assumed that the transmit queues are filled internally by the node.
\item[Forward everything:] 
\begin{equation*}
q(x_1,\dotsc,x_n)=(x_1,x_3,\dotsc,x_n).
\end{equation*}
 This protocol places every received packet at the start of the transmit queue, overwriting whatever happened to be there.  Because of this, the protocol may drop packets at the start of the transmit queue.  It is helpful to assume that $x_i=0$ means that nothing is waiting in that slot of the queue.
\item[Forward everything, with queue management:] 
\begin{equation*}
q(x_1,\dotsc,x_n)=(0,\dotsc,0,x_1,\dotsc,x_n),
\end{equation*}
where $x_1$ is placed in the last empty slot in the queue so that it gets transmitted sooner than in the previous protocol.  If there are no nonzero slots in the queue, then the starting queue entry is dropped and replaced with $x_1$.
\item[Forward everything for other recipients:] 
\begin{equation*}
q(x_1,\dotsc,x_n)=\begin{cases}
(0,\dotsc,0,x_1,\dotsc,x_n)&\text{if destination of }x_1 \text{ is not }n_i\\
(x_2,x_3,\dotsc,x_n)&\text{if destination of }x_1 \text{ is }n_i\\
\end{cases}
\end{equation*}
  This protocol forwards every packet that's not destined for the node that received it, by placing it at the start of the queue.  Packets that are destined for the receiving node are not forwarded, so it is effectively a combination of the previous two protocols.  Again, the protocol may drop packets from the start of the transmit queue.
\item[Forward everything for other recipients (two priority levels):]
\begin{equation*}
q(x_1,\dotsc,x_n)=\begin{cases}
(x_3,\dotsc,x_n,x_1)&\text{if destination of }x_1 \text{ is not }n_i,\\
&\text{and priority of }x_1 \text{ is HIGH}\\
(0,\dotsc,0,x_1,\dotsc,x_n)&\text{if destination of }x_1 \text{ is not }n_i,\\
&\text{and priority of }x_1\text{ is LOW}\\
(x_2,x_3,\dotsc,x_n)&\text{if destination of }x_1 \text{ is }n_i\\
\end{cases}
\end{equation*}
 This protocol is the same as the previous one, however packets marked as HIGH priority jump to the end of the transmit queue (next to transmit), dropping any packet at the start of the transmit queue.
\end{description}

\begin{figure}
\begin{center}
\includegraphics[width=3.5in]{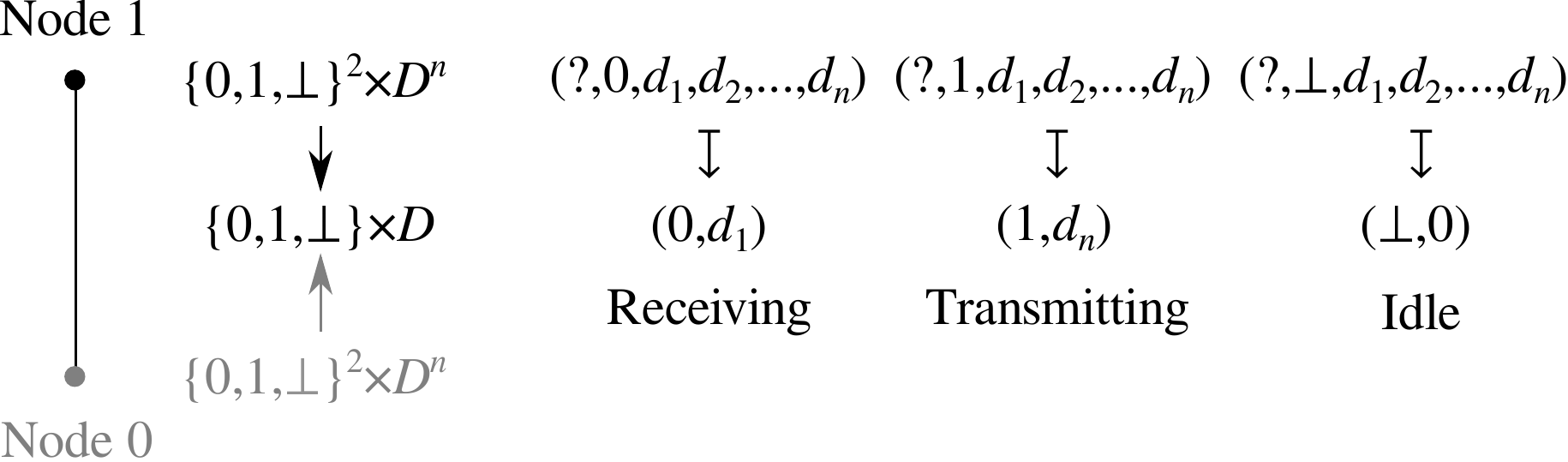}
\caption{The within-timestep restrictions for the sheaf in Figure \ref{fig:two_nodes_timeseries}}
\label{fig:two_nodes_timeseries_link}
\end{center}
\end{figure}

Within a timestep, the restriction maps are based on those of an activation sheaf.  As Figure \ref{fig:two_nodes_timeseries_link} shows, the current state variable determines not only the identity of the node utilizing the link (if any) but also the data present on that link.  Observe that when a node is transmitting, the last element of the transmit buffer is what appears on the link.  When the node is receiving, the link's contents match with the receive buffer. 

Using the above example as a recipe, it is straightforward to define a similar sheaf over a time-dependent link complex.

\begin{df}
A \emph{data payload sheaf} $\shf{D}$ over a time-dependent link complex $L$ with nodes $N$ is parameterized by
\begin{enumerate}
\item A vector space $D$ of possible packets, and
\item A transmit queue length $n-1$.
\end{enumerate}
The stalks of $\shf{D}$ are given by
\begin{description}
\item[For each vertex $c$ of $L$:] $(\{a \in N : \text{there exists a cell } d\in X \text{ with } c \subset d \text{ and } a\in d\}\cup\{\perp\})^2 \times D^n$
\item[For each edge of the form $((c,t),(c,t+1))$:] $(\{a \in N : \text{there exists a cell } d\in X \text{ with } c \subset d \text{ and } a\in d\}\cup\{\perp\}) \times D^{n-1}$
\item[For all other simplices $c$ of $L$:] $(\{a \in N : \text{there exists a cell } d\in X \text{ with } c \subset d \text{ and } a\in d\}\cup\{\perp\}) \times D$
\end{description}
The restrictions of $\shf{D}$ are given by
\begin{enumerate}
\item Between timeslices
\begin{equation*}
\shf{D}\left((a,t) \subset ((a,t),(a,t+1))\right)(n_1,n_2,x_1,\dotsc,x_n)=  
\begin{cases}
(n_2,x_2,\dotsc,x_{n-1},0)&\text{if }n_2=a\text{ and } x_n\not= 0\\ 
(n_2,x_{p(1)},\dotsc,x_{p(n-1)})&\text{if } n_2\not= a\text{ and }n_2\not= \perp\\ 
(\perp,x_2,\dotsc,x_n)&\text{otherwise}\\ 
\end{cases}
\end{equation*}
where $(x_1,\dotsc,x_n) \mapsto (x_{p(1)},\dotsc,x_{p(n-1)})$ is a receive queue function.
\begin{equation*}
\shf{D}\left((a,t+1) \subset ((a,t),(a,t+1))\right)(n_1,n_2,x_1,\dotsc,x_n)= 
\begin{cases}
(n_1,x_3,\dotsc,x_n,0)&\text{if }n_1=a\\ 
(n_1,x_2,\dotsc,x_n)&\text{if }n_1\not= a\text{ and }n_1\not= \perp\\ 
(\perp,x_2,\dotsc,x_n)&\text{otherwise}\\ 
\end{cases}
\end{equation*}

\item Within timeslices, all restrictions between simplices $a \subset b$ of dimension 1 or higher are of the form
\begin{equation*}
\shf{D}\left((a,t) \subset (b,t))\right)(n,x)=
\begin{cases}
(n,x)&\text{if }(n,x) \in \shf{D}((b,t))\\
(\perp,0)&\text{otherwise}\\
\end{cases}
\end{equation*}
while restrictions from a vertex $a$ to an edge $(a,b)$ are given by
\begin{equation*}
\shf{D}\left((a,t) \subset ((a,t),(b,t))\right)(n_1,n_2,x_1,\dotsc,x_n)=
\begin{cases}
(n_2,x_n)&\text{if }n_2= a\text{ and } x_n\not= 0\\ 
(n_2,x_1)&\text{if }n_2\not= a\\ 
(\perp,0)&\text{otherwise}\\ 
\end{cases}
\end{equation*}
\end{enumerate}
\end{df}

\begin{proposition}
Every data payload sheaf contains an activation sheaf as a subsheaf when restricted to any timeslice.
\end{proposition}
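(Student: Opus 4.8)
The plan is, for an arbitrary timeslice, to exhibit an injective morphism of cellular sheaves from the activation sheaf onto a subsheaf of $\shf{D}$, and then to verify it commutes with every restriction map $\shf{D}$ inherits on that timeslice. Fix a time $t$ and let $L_t$ be the timeslice; as already noted $L_t$ is itself a link complex, so the activation sheaf $\shf{A}$ of $L_t$ is defined, and restricting $\shf{D}$ to $L_t$ discards the between-timeslice edges, leaving only the within-timeslice restrictions. The structural fact that makes this go is that every stalk of $\shf{D}|_{L_t}$ is a product of copies of the relevant activation stalk with a packet space --- namely $\shf{A}(c)^2\times D^n$ over a vertex and $\shf{A}(c)\times D$ over every higher cell (and one checks the set of nodes sharing a coface with $c$ is the same computed in $L_t$ or in the ambient time-dependent complex, since the time-edges only involve the node itself). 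So $\iota$ should inject an activation value into the state-variable factor(s) and fill the packet factors with fixed data.

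First I would dispose of the easy half. For a face relation $a\subset b$ within a timeslice with $\dim a\ge 1$, the restriction map of $\shf{D}$ is by definition ``keep the state-packet pair if the state still lies in the target activation stalk, and otherwise send it to $(\perp,0)$'', which is exactly $\shf{A}(a\subset b)$ on the state coordinate with the packet carried along (or zeroed when the state collapses to $\perp$). Hence any $\iota$ that is the identity on the state coordinate and constant on the packet coordinate away from $\perp$ intertwines these restrictions automatically, and is injective on such stalks as soon as the packet constant is nonzero. Since restrictions compose, it then suffices to check compatibility with the generating, codimension-one restrictions.

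The hard part, and the only place care is needed, is the vertex-to-edge restriction within a timeslice, for two reasons. First, a vertex carries two state variables whereas $\shf{A}(c)$ carries one; this is handled by letting $\iota$ send a vertex value $n$ to $(n,n,\dots)$, so the state slot the edge restriction reads sees $n$. Second, that restriction has a branch ``if $n_2=a$ and $x_n\ne 0$, output $(n_2,x_n)$'', with ``otherwise, output $(\perp,0)$'': a node whose state asserts it is transmitting but whose transmit queue is empty produces an idle link, whereas $\shf{A}(a\subset e)$ sends $a$ to $a$ there. So the naive all-zero-packet embedding fails on exactly this branch; the remedy is to seed the packet factors with a fixed nonzero $d^{\ast}\in D$, placed in the last transmit-queue slot when the vertex's state equals the vertex itself and in the receive-buffer slot when the state is some other decodable node, so the ``transmitting'' branch always fires. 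I would then verify, in the four cases (state equal to the vertex; state a neighbour decodable over the edge; state a neighbour not decodable over the edge; state $\perp$), that the restriction of $\iota$ of a vertex value agrees with $\iota$ of $\shf{A}(a\subset e)$ applied to that value --- reading the restriction so that its $(n_2,x_1)$ branch returns $(\perp,0)$ whenever $n_2$ is not decodable over the edge, which is what $\shf{A}(a\subset e)$ does there and is forced in any case because the output must lie in the stalk of $\shf{D}$ over the edge. Each stalk map is injective because the state coordinate recovers $n$ (using $d^{\ast}\ne 0$), so $\iota$ is a monomorphism of sheaves and its image is a subsheaf of $\shf{D}|_{L_t}$ isomorphic to $\shf{A}$, which is the assertion.
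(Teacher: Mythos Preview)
Your argument is correct: you build an honest monomorphism $\iota:\shf{A}\hookrightarrow\shf{D}|_{L_t}$ by doubling the state on vertices and seeding the packet slots with a fixed nonzero $d^\ast$, and the four-way case check on the vertex-to-edge restriction goes through (with the reasonable reading of the $(n_2,x_1)$ branch you flag). The observation that cofaces of a cell in $L_t$ agree with those in the ambient complex, because time-edges only add the node itself, is exactly what is needed to identify the activation stalks.

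The paper, however, does something different and much shorter. Rather than embedding $\shf{A}$ into $\shf{D}|_{L_t}$, it \emph{projects} $\shf{D}|_{L_t}$ onto $\shf{A}$: over a vertex one keeps the second state coordinate $n_2$, over every higher simplex one keeps the first (state) coordinate, and one then observes that the within-timeslice restrictions of $\shf{D}$ act on those coordinates exactly as the activation restrictions do. So the paper's ``subsheaf'' is really a quotient/factor obtained by a sheaf morphism $A_t:\shf{D}\to A_t\shf{D}$ with surjective stalk maps, as the paragraph immediately following the proof makes explicit. This sidesteps entirely the delicate point you had to work around --- the ``transmitting with empty queue'' branch --- because the projection simply discards the packet data and never sees it. Your route gives a genuine subobject at the cost of choosing auxiliary packet data and a longer case analysis; the paper's route is a one-line inspection but uses ``subsheaf'' loosely to mean a canonical factor rather than a subobject.
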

\begin{proof}
It is only necessary to match the definition of the stalks and restrictions within a timeslice, and to project onto the second component of each stalk over vertices and the first component over all other simplices.
\end{proof}

This means that the data payload sheaf incorporates the transmission structure described previously for activation sheaves, and more importantly that the within-timeslice restrictions for the data payload sheaf describe the relationship between the data on links and within the nodes.  

For clarity, if $\shf{D}$ is a data payload sheaf on a time-dependent link complex $X$, the activation subsheaf at time $t$ is written $A_t\shf{D}$.  Therefore, there is a collection of surjections on stalks $A_t(a):\shf{D}(a,t) \to A_t\shf{D}(a)$ that project out the appropriate components of the stalks as in the proof above.  These surjections have the property that they are compatible with both sheaves, in that if $a \subset b$
\begin{equation*}
A_t(b) \circ  \shf{D}(a \subset b) = A_t\shf{D}(a \subset b) \circ A_t(a). 
\end{equation*}
This is taken to be the description of a \emph{sheaf morphism} $A_t : \shf{D} \to A_t \shf{D}$.

\begin{proposition}
When restricted to a single node $n$, every data payload sheaf contains a $2$-grouping sheaf taking values in the nodes adjacent to $n$ as a subsheaf.
\end{proposition}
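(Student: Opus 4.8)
The plan is to make the node-$n$ restriction of $\shf{D}$ explicit and then exhibit the $2$-term grouping sheaf as the subsheaf of ``state variables carrying an empty payload.'' First, ``restricted to a single node $n$'' means the restriction of the cellular sheaf $\shf{D}$ to the subcomplex $Y_n$ of the time-dependent link complex whose cells are the vertices $(n,t)$ and the time-edges $((n,t),(n,t+1))$ for $t\in\mathbb{Z}$; by the definition of the time-dependent link graph these time-edges are always present, so $Y_n$ really is a subcomplex, and the assignment $(n,t)\mapsto t$, $((n,t),(n,t+1))\mapsto(t,t+1)$ identifies it with the simplicial model of $\mathbb{R}$ over which a grouping sheaf is defined. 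Since $Y_n$ contains no cell lying within a single timeslice other than its own vertices, the only restriction maps of $\shf{D}$ that act on $Y_n$ are the two between-timeslice formulas, so the whole verification amounts to substituting into those.

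Next I would let $A$ be the set of nodes adjacent to $n$ together with $\perp$, and define a candidate subsheaf $\shf{B}$ of $\shf{D}|_{Y_n}$ by $\shf{B}(n,t)=A^2\times\{0\}$ on each vertex, where $0$ denotes the zero vector of $D^n$, and $\shf{B}((n,t),(n,t+1))=A\times\{0\}$ on each time-edge, with $0\in D^{n-1}$; informally, the state variables are arbitrary but the data queue is frozen at zero. I would then identify $\shf{B}(n,t)$ with the grouping-sheaf stalk $A^2$ via $(n_1,n_2)\mapsto(n_2,n_1)$ and $\shf{B}((n,t),(n,t+1))$ with $A^1=A$ via its first coordinate.

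The heart of the argument is a direct check that, for $a=n$ and any tuple $(n_1,n_2,0,\dots,0)$ with $n_1,n_2\in A$, each between-timeslice restriction of $\shf{D}$ sends this element into $\shf{B}$ and, on state components, agrees with $\sigma_+$ or $\sigma_-$ under the identification above. The observation that makes this routine is that a node is not adjacent to itself, so $n_1,n_2\neq n$; hence the first branch of each piecewise formula --- the one that fires only when the node itself is transmitting with a nonempty queue --- is never taken, and in the remaining two branches the receive queue function $q$ and the various index shifts act on the all-zero vector and return the all-zero vector. Consequently the restriction to $(n,t)$ returns $(n_2,0,\dots,0)$ and the restriction to $(n,t+1)$ returns $(n_1,0,\dots,0)$, both in $\shf{B}$; reading off the state coordinate and comparing with $\sigma_+(n_2,n_1)=n_2$ (the restriction onto the later-time edge) and $\sigma_-(n_2,n_1)=n_1$ (the restriction from the later vertex) shows that $\shf{B}$ carries precisely the restriction structure of $\shf{A}^{(2)}$, and that the inclusion $\shf{B}\hookrightarrow\shf{D}|_{Y_n}$ commutes with restrictions --- so $\shf{B}$ is the claimed subsheaf.

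The one step I expect to require genuine care, rather than being a formality, is the bookkeeping around the index set $A$. One must check that the chosen set of neighbors of $n$ really embeds in the first-coordinate factor of the stalk of $\shf{D}$ at every vertex \emph{and} every time-edge of $Y_n$ --- the stalk over a time-edge is a priori the smaller of the two --- and one must decide how to treat a neighborhood of $n$ that changes from timeslice to timeslice; the clean options are to take $A$ to be the intersection of the neighbor sets over all $t$ (together with $\perp$), or to work over a window of timeslices on which $n$'s neighborhood is stable. One should also record that $\perp$ must be included in $A$ for $\shf{B}$ to be closed under restriction, and that including it does no harm since $\sigma_\pm$ are projections insensitive to which symbol they transport.
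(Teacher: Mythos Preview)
Your proposal is correct and is exactly the ``inspection of the restriction maps'' that the paper's one-line proof invokes without carrying out; you have simply written down the subcomplex $Y_n$, the candidate subsheaf $\shf{B}$, and verified case by case that the between-timeslice formulas collapse to $\sigma_\pm$ on the zero-payload locus. Your caveats about including $\perp$ in $A$ and about the time-varying neighbor set are well taken and in fact sharper than what the paper states.
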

\begin{proof}
By inspection of the restriction maps in the definitions.
\end{proof}

As a result, transmissions between timeslices are decoupled from one another.  It is important to realize that this does \emph{not} mean that the data payloads are decoupled.   Instead, given a sequence of nodes that transmit at each timeslice -- global sections of each activation sheaf in each timeslice -- the data payload sheaf will describe the pathways for threading data through the network as the next proposition states.  

\begin{proposition}
\label{prop:fixed_activation_capacity}
Given a time-dependent link complex $X$ and a data payload sheaf $\shf{D}$ and global sections $\{s_t\}$ for each activation subsheaf $A_t\shf{D}$, then 
\begin{enumerate}
\item the restriction of each stalk $\shf{D}(a,t)$ to the collection of elements whose image through $A_t(a)$ is $s_t(a)$ yields a subsheaf $\shf{P}$ and
\item $\shf{P}$ is a sheaf taking values in the same category as the data payloads $D$, and
\item if $D$ is a vector space, the dimension of the space of global sections of $\shf{P}$ is an upper bound on the network's throughput given the transmission pattern described by $\{s_t\}$.
\end{enumerate}
\end{proposition}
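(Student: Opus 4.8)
The plan is to dispatch the three assertions in turn. The first two are structural and reduce to inspecting the restriction formulas of $\shf{D}$ with the activation coordinates frozen; the third rests on committing to an honest notion of \emph{throughput}, and that is where the real content lies. Throughout, write $X_t$ for the time-$t$ timeslice and use two facts already in hand: $\shf{D}$ carries a sheaf morphism $A_t:\shf{D}\to A_t\shf{D}$ onto its activation subsheaf on $X_t$, satisfying $A_t(b)\circ\shf{D}(a\subset b)=A_t\shf{D}(a\subset b)\circ A_t(a)$ whenever $a\subset b$ in $X_t$; and the two state coordinates of $\shf{D}$, restricted to a single node $a$ and the edges $((a,t),(a,t+1))$, form a $2$-term grouping subsheaf valued in the nodes adjacent to $a$. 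I will define $\shf{P}$ cell by cell: for $c$ lying in a timeslice $X_t$, put $\shf{P}(c)=A_t(c)^{-1}\bigl(\{s_t(c)\}\bigr)\subseteq\shf{D}(c)$, and for a between-timeslice edge $e=((a,t),(a,t+1))$ let $\shf{P}(e)$ be the subset of $\shf{D}(e)$ cut out by fixing the state coordinate to the value the grouping subsheaf forces from $\{s_t\}$, namely $s_t(a)$.

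For assertion (1), by cellular-sheaf functoriality it suffices to check that every restriction map of $\shf{D}$ carries the designated subset of its source stalk into the designated subset of its target stalk. Within one timeslice this is immediate: for $v\in\shf{P}(a)$,
\begin{equation*}
A_t(b)\bigl(\shf{D}(a\subset b)\,v\bigr)=A_t\shf{D}(a\subset b)\bigl(A_t(a)\,v\bigr)=A_t\shf{D}(a\subset b)\bigl(s_t(a)\bigr)=s_t(b),
\end{equation*}
the last step because $s_t$ is a global section of $A_t\shf{D}$; hence $\shf{D}(a\subset b)\,v\in\shf{P}(b)$. Across a between-timeslice edge one reads off from the two displayed formulas for $\shf{D}((a,t)\subset e)$ and $\shf{D}((a,t+1)\subset e)$ that the first output coordinate is $n_2$ of the lower-vertex input, resp.\ $n_1$ of the upper-vertex input (or $\perp$ in the idle case); since the grouping-subsheaf section attached to $\{s_t\}$ puts $n_2=s_t(a)$ at $(a,t)$ and $n_1=s_t(a)$ at $(a,t+1)$, both stalks land in $\shf{P}(e)$. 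This is the only place a small case analysis is needed — including the harmless degenerate case of a transmitting node with an empty transmit queue, where the formulas emit $\perp$ and no section threads through — and it goes through by inspection. For assertion (2), note that once the state coordinates are frozen at their scheduled values each stalk $\shf{P}(c)$ is canonically a power of the payload object: $D^n$ over a vertex, $D^{n-1}$ over a between-timeslice edge, $D$ over every other cell. Running through the restriction formulas with the states held fixed, every restriction map of $\shf{P}$ becomes a coordinate projection, an insertion of the zero packet into some coordinates, or a receive queue function $q$; each is a morphism in whatever category $D$ inhabits — a map of (pointed) sets in general, since $q$ fixes the all-zero input, and a \emph{linear} map precisely when $q$ is chosen linear, as for the ``forward nothing'' and ``forward everything'' protocols. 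Thus $\shf{P}$ is a sheaf valued in the same category as $D$, and in particular a sheaf of vector spaces when $D$ is a vector space and $q$ is linear, which is the regime in which assertion (3) is meaningful.

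For assertion (3), take $D$ a vector space and $q$ linear, so $H^0(\shf{P})$ is a vector space. I will make ``throughput under the schedule $\{s_t\}$'' precise as the maximal dimension, over choices of which payloads are injected into transmit queues, of the space of payloads that are subsequently delivered to their destinations consistently with the link activity that $\{s_t\}$ forces. The key observation is that a consistent run of the network under this schedule — an assignment of a packet to every transmit-queue slot, every receive buffer, and every link at every timeslice satisfying all the restriction constraints of $\shf{D}$ with the activation pinned to $\{s_t\}$ — is exactly a global section of $\shf{P}$, i.e.\ an element of $H^0(\shf{P})$. The payload data actually delivered is recovered from such a run by a linear read-off (projection onto the relevant receive-buffer coordinates at the destination timeslices), so the space of deliverable payloads is a linear image of $H^0(\shf{P})$ and hence has dimension at most $\dim H^0(\shf{P})$; since this holds for every injection pattern, throughput $\le\dim H^0(\shf{P})$. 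I expect the main obstacle to be precisely this last step at the level of \emph{definition}: settling on a formalization of throughput for which the inequality is simultaneously true and non-vacuous. Once that is fixed the inequality itself is one line, and it is genuinely only an inequality — $H^0(\shf{P})$ also accounts for data that merely loiters in queues or recirculates without ever reaching a destination, which inflates its dimension above the delivered throughput.
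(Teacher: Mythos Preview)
Your treatment of (1) and (2) is essentially the paper's, carried out with more care: the paper simply declares the stalks of $\shf{P}$ to be $D^n$, $D^{n-1}$, $D$ and says the restrictions become projections, whereas you verify the subsheaf condition via the morphism identity $A_t(b)\circ\shf{D}(a\subset b)=A_t\shf{D}(a\subset b)\circ A_t(a)$ and the grouping subsheaf on between-timeslice edges. You also flag a point the paper glosses over entirely, namely that $\shf{P}$ is a sheaf of vector spaces only when the receive queue function $q$ is linear; this is a genuine hypothesis that the paper's statement of (3) tacitly assumes. One small inconsistency in your write-up: your definition of $\shf{P}(a,t)$ via $A_t(a)^{-1}(\{s_t(a)\})$ fixes only the \emph{current} state $n_2$ and leaves $n_1$ free, so the stalk is not literally $D^n$ as you later assert in (2); the $n_1$ coordinate gets pinned to $s_{t-1}(a)$ only at the level of global sections, via the edge constraints. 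This does not affect (3), but you should either enlarge the definition of $\shf{P}$ at vertices to freeze both state slots, or soften the stalk claim in (2).

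For (3) the two arguments genuinely diverge. The paper argues \emph{constructively}: for each packet $x$ that can be routed from a source to a destination under the schedule $\{s_t\}$, it builds a global section of $\shf{P}$ carrying $x$ along the route and zero elsewhere, so the deliverable-packet space embeds into $H^0(\shf{P})$. Your argument is \emph{by identification and projection}: a run of the network under the fixed schedule \emph{is} a global section of $\shf{P}$, and the delivered data is a linear read-off, hence a quotient of $H^0(\shf{P})$. Both are valid and yield the same inequality. The paper's route is more concrete and makes the extension-by-zeros step explicit, which is useful if one later wants to analyze which sections actually correspond to deliveries; your route is shorter, handles multiple simultaneous transfers in one stroke without needing to superpose single-packet sections, and makes transparent why the bound is not tight (loitering or recirculating data inflates $H^0(\shf{P})$ without contributing to throughput).
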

\begin{proof}
\begin{enumerate}
\item The stalk of $\shf{P}$ over a vertex $a$ at time $t$ is given by $\shf{P}(a,t) = D^n$, the stalk over an edge $e$ connecting two timeslices is $\shf{P}(e,t) = D^{n-1}$, and the stalk over any other simplex is merely $D$.  The restrictions then all become projection maps between copies of $D$.   
\item This is evident by construction, since the portion of each stalk of $\shf{D}$ that remains after fixing the $\{s_t\}$ consists of a product of copies of $D$.
\item Suppose it is possible that a packet $x$ can transmitted from node $a$ to node $b$ with the activation pattern given by $\{s_t\}$.  To prove the bound, we construct a global section of $\shf{P}$ that supports this particular data transfer.  This means there is a sequence of nodes $a=n_1, n_2, \dotsc, n_k=b$ through which the packet travels, and that $n_i$ and $n_{i+1}$ are adjacent nodes for each $i$.  This means that those nodes must come active in that sequence, so there is a sequence $t_1 < \dotsc < t_k$ in which $s_{t_i}(n_i)=n_i$ for all $i$ and for which the data over the star of $n_i$ at that time $t_i$ is $x$ for all $i$.  This procedure defines a local section $S$ of $\shf{P}$ on the vertices $(n_i,t_i)$, which can easily be extended to be a section over all timeslices by assigning zero data to all simplices outside the region of influence of the active nodes.  It remains to extend to the edges connecting timeslices, but this is merely a matter of assigning zeros to the yet-unassigned slots in the queues.  
\end{enumerate}
\end{proof}

Beware that the upper bound given in Proposition \ref{prop:fixed_activation_capacity} is not tight -- there may be global sections that describe packets that do not reach their intended destination(s).

\section{Next steps}

The next logical theoretical steps for these analyses are the following:
\begin{enumerate}
\item \emph{Analyzing packet flow through the network.}  We aim for a theorem that shows that packets cannot overlap except if they're \emph{waiting} in a queue at a node.  They cannot overlap on higher dimensional faces within a timeslice.
\item \emph{Protocol level jamming.} Showing how it can be possible to use regions of influence of particular routes to obstruct the network, possibly at particular times.
\item \emph{Developing protocol models through a layered construction.}  The resulting construction will likely leverage a filtration of sheaves or more generally a sequence of sheaf morphisms that describe increasing protocol complexity.  In either of these settings, the abstract tool of spectral sequences may describe constraints imposed on higher layers from the lower ones.
\end{enumerate}

\section*{Acknowledgement}
This research was developed with funding from the Defense Advanced Research Projects Agency (DARPA) via Federal contract HR0011-15-C-0050.  The views, opinions, and/or findings expressed are those of the authors and should not be interpreted as representing the official views or policies of the Department of Defense or the U.S. Government.

%







\bibliographystyle{IEEEtran}
\bibliography{storeforward_bib}

\end{document}